\documentclass[conference]{IEEEtran}
\IEEEoverridecommandlockouts
\usepackage{cite}
\usepackage{amsmath,amssymb,amsfonts}
\usepackage{graphicx}
\usepackage{textcomp}

\usepackage{amsmath}
\usepackage{graphicx}
\usepackage{algorithm}
\usepackage{algpseudocode}
\usepackage{caption}
\usepackage{tfrupee}

\def\BibTeX{{\rm B\kern-.05em{\sc i\kern-.025em b}\kern-.08em
    T\kern-.1667em\lower.7ex\hbox{E}\kern-.125emX}}

\algnewcommand{\LeftComment}[1]{\hskip1em$\triangleright$ #1}
\algnewcommand{\LComment}[1]{\Statex \hskip15em \(\triangleright\) #1}
\algnewcommand{\LtwoComment}[1]{\Statex \hskip17.3em \(\triangleright\) #1}
\algnewcommand{\LefComment}[1]{\Statex \hskip14em$\triangleright$ #1}

\newtheorem{theorem}{Theorem}[section]
\newtheorem{lemma}[theorem]{Lemma}

\newtheorem{corollary}[theorem]{Corollary}

\newenvironment{proof}[1][Proof]{\begin{trivlist}
\item[\hskip \labelsep {\bfseries #1}]}{\end{trivlist}}

\newcommand{\qed}{\nobreak \ifvmode \relax \else
      \ifdim\lastskip<1.5em \hskip-\lastskip
      \hskip1.5em plus0em minus0.5em \fi \nobreak
      \vrule height0.75em width0.5em depth0.25em\fi}

\long\def\/*#1*/{}

\begin{document}
%
\title{An algorithmic approach to handle circular trading in commercial taxing system}

\/*
\author{\IEEEauthorblockN{1\textsuperscript{st} Jithin Mathews}
\IEEEauthorblockA{\textit{Department of Computer Science} \\
\textit{Indian Institute of Technology}\\
India\\
Email: {cs15resch11004@iith.ac.in}}
\and
\IEEEauthorblockN{2\textsuperscript{nd} Priya Mehta}
\IEEEauthorblockA{\textit{Department of Computer Science} \\
\textit{Indian Institute of Technology Hyderabad}\\
India\\
Email: {cs15resch11007@iith.ac.in}}
\and
\IEEEauthorblockN{3\textsuperscript{rd} Ch. Sobhan Babu}
\IEEEauthorblockA{\textit{Department of Computer Science} \\
\textit{Indian Institute of Technology Hyderabad}\\
India\\
Email: {sobhan@iith.ac.in}}
}
*/

\author{\IEEEauthorblockN{Jithin Mathews}
\IEEEauthorblockA{\textit{Department of Computer Science} \\
\textit{Indian Institute of Technology}\\
India\\
Email: {cs15resch11004@iith.ac.in}}
\and
\IEEEauthorblockN{Priya Mehta}
\IEEEauthorblockA{\textit{Department of Computer Science} \\
\textit{Indian Institute of Technology Hyderabad}\\
India\\
Email: {cs15resch11007@iith.ac.in}}
\and
\IEEEauthorblockN{S.V. Kasi Visweswara Rao}
\IEEEauthorblockA{\textit{Department of Commercial Taxes}\\
\textit{Government of Telangana}\\
India\\
Email: {svkasivrao@gmail.com}}\\
\and
\IEEEauthorblockN{Ch. Sobhan Babu}
\IEEEauthorblockA{\textit{Department of Computer Science} \\
\textit{Indian Institute of Technology Hyderabad}\\
India\\
Email: {sobhan@iith.ac.in}}
}

\maketitle

\begin{abstract}
Tax manipulation comes in a variety of forms with different motivations and of varying complexities. In this paper, we deal with a specific technique used by tax-evaders known as circular trading. In particular, we define algorithms for the detection and analysis of circular trade. To achieve this, we have modelled the whole system as a directed graph with the actors being vertices and the transactions among them as directed edges. We illustrate the results obtained after running the proposed algorithm on the commercial tax dataset of the government of Telangana, India, which contains the transaction details of a set of participants involved in a known circular trade.
\end{abstract}

\begin{IEEEkeywords}
data mining, bigdata analytics, social network analysis, circular trading, forensic accounting, value added tax.
\end{IEEEkeywords}

\section{Introduction}

Fraudulent activity, unfortunately, is inherent in our society from time immemorial. It is primarily motivated by the unscrupulous desire of people to make personal benefits by exploiting the loopholes in the existing laws in a system. Certain types of fraudulent activities are easier to identify and scrutinize. On the other hand, there are fraudulent methods that are extremely difficult to track down due to the complexity of the processes involved in handling them. In \cite{1ref}, Van Vlasselaer et al. gives a formal, concise and complete definition of $`$fraud':

$``$Fraud is an uncommon, well-considered,
imperceptibly concealed, time-evolving and often
carefully organized crime which appears in many
types of forms.$"$

In this paper, we propose a systematic technique using social network analysis to handle a complicated type of financial fraud, which is widely rampant in the commercial taxing system, known as $\it{circular\,trading}$. It is committed by business entities with the intention of evading tax, which they are liable to pay to the government. $\it{Circular\,trading}$ is a theft of Value Added Tax (or VAT) from the government by a business entity by creating fictitious business firms and diligently organizes with them to manipulate the financial information submitted in their commercial tax return filing. It is similar to the infamous $\it{carousal\,fraud}$ \cite{2ref}, which is a comparatively less sophisticated method, used by fraudsters for tax-evasion. $\it{Bill\,trading}$ \cite{2.1ref} is another technique used in tax-evasion where a dealer sells some goods to another dealer without raising an invoice, but collects the tax from him. The former dealer then issues fake invoice to a third dealer who uses it to minimize his tax liability. Note that for conducting the proposed research work we have used the commercial tax data set shared by the Telangana state government, India. 

In VAT system, when a business dealer, say dealer $B$, purchases some goods from another dealer, say dealer $A$, dealer $B$ is liable to pay a certain amount of tax on the purchased goods to dealer $A$ and let us call it as the $input\,tax$ paid by dealer $B$ to dealer $A$ on the business transaction. Similarly, when dealer $B$ sells these goods to another dealer, say dealer $C$, dealer $B$ will receive a certain amount of tax on the sold goods from dealer $C$ and let us call it as the $output\,tax$ received by dealer $B$ from dealer $C$ on the business transaction. In this case, the amount of tax received by the government from dealer $B$ is equal to the difference between the $output\,tax$ received by $B$ and the $input\,tax$ paid by $B$. In other words, $tax\,payable = (output\,tax\,received - input\,tax\,paid)$.

This formula is universal for any business dealer. However, when this difference becomes a negative value, $i.e.$, when the $input\,tax$ paid becomes greater than the $output\,tax$ received, the dealer will receive Credit Carry Forward (or CCF) \cite{3ref}, which (s)he can claim from the government or can use it against paying tax in the future. Note that through out the paper cash is represented in Indian currency $``Rupees"$ ($Rs.$ or \rupee). In Figure $1$, we pictorially illustrate the flow of money in a value added taxing system. Here, the producer, who makes raw materials, sells them to a manufacturer for \rupee~1200 imposing 10\% of tax and thereby collecting \rupee~120 in tax. Since producer does not have any input tax, the $tax\,payable$ is \rupee~120 and he pays it to the government. The manufacturer processes the raw materials, makes it into a product and sells it to a retailer for a higher price. Here he collects a tax of \rupee~180 from the retailer. The amount of tax that the manufacturer needs to pay to the government as a result of the previously mentioned value addition is, $tax\,payable = (180 - 120) = $ \rupee~60. Finally, the retailer adds more value to the product, like, the packing of the product, and sells it to a consumer for a higher price by collecting a tax of \rupee~200. In this case, $tax\,payable$ by the retailer to the government is $(200 - 180) = $ \rupee~20. Hence a total of \rupee~200 (= \rupee~120 + \rupee~60 + \rupee~20) is collected by the government from different stages of this transaction.

\begin{figure}[h!]
\begin{center}
  \caption{Flow of money in VAT system}
\includegraphics[scale=0.3]{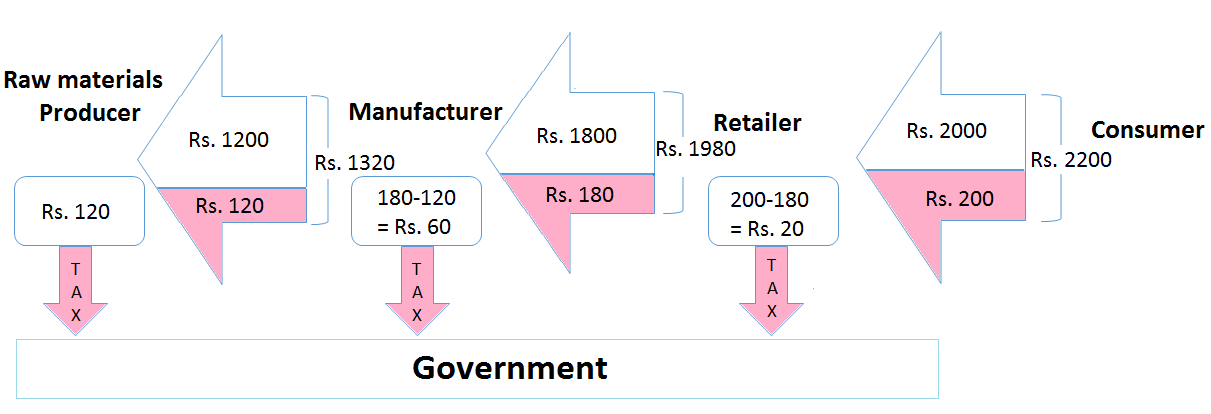}
\label{graph_table1}
\end{center}
\end{figure} 

\subsection{$\it{Circular\,trading}$}

The primary motivation for $\it{circular\,trading}$ is to hide malicious sales and(or) purchases information from the tax enforcement officers. This is done by superimposing those transactions by carefully fabricated transactions, which we call as illegitimate transactions throughout the paper. The classical theme in such an evasion is described in the following steps:

$Step\,1.$ Dealer $A$ would purposefully omit some of his/her sales and purchases information in the tax returns. These malicious tax-return information will result in the reduction of the dealer's $tax\,payable$ and he/she ends up paying less tax to the government. However, this cannot continue for longtime since the dealer's financial growth may not be in proportion to the amount of tax (s)he pays and consequently becomes more likely to get caught.

$Step\,2.$ Guided with the intention to hide the manipulation in his/her tax returns, dealer $A$ will create a few fictitious dealers using the personal identification details of his/her trusted acquaintances.

$Step\,3.$ At this stage, dealer $A$ will fabricate numerous sales and purchases information between himself and the fictitious dealers by making sure that the fabricated sales and purchases information are liable to a negligible amount of tax. The tax payable on these illegitimate transactions is almost zero since they amount to almost zero value addition.

Hence, dealer $A$ ingeniously manages to camouflage into the nexus of fictitious dealers that (s)he has created. In fact, this helps the dealer to successfully suppress his/her sales and purchases information without getting into the hands of tax enforcement officers.

Despite of the carefully orchestrated manipulations, the dealer engaged in $\it{circular\,trading}$ cannot avoid giving rise to undesired patterns in the flow of transactions. In this paper, we exploit this facet of the manipulated tax returns. One can easily observe that the manipulation, as defined in the last three steps, will result in the formation of flow of goods in a circular manner. For example, in $Step\,3$, which is illustrated in Figure $2$, dealer $A$ seems to sell some goods to another dealer, say to dealer $B$, and dealer $B$ seems to sell the same kind of goods to dealer $C$, and finally dealer $A$ purchases the same kind of goods from dealer $C$, hence completing the cycle. Note that the $Value$ of goods transferred is almost the same in all the three transactions that create the cycle. Generally, this is not a desired pattern for the flow of goods if the transactions are authentic. These cycles become much complicated to analyze with the involvement of more than 3 dealers.

\begin{figure}[h!]
\begin{center}
  \caption{Circular trading}
\includegraphics[scale=0.4]{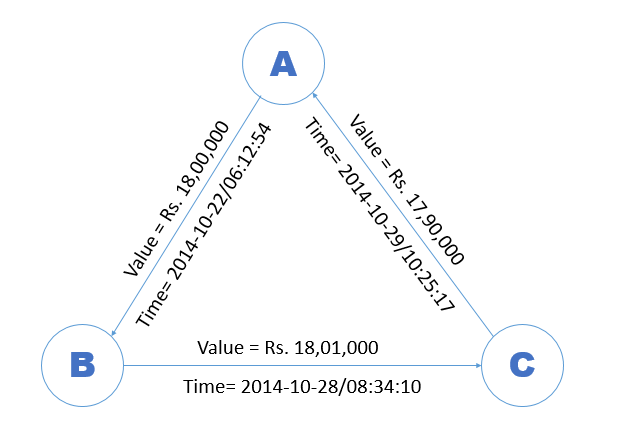}
\label{graph_table2}
\end{center}
\end{figure}

The main obstacles in identifying malicious sales transactions are the large size of the dataset, complex sequences of the illegitimate transactions and the large number of traders involved in $\it{circular\,trading}$. In this paper, we propose an algorithm to remove the illegitimate transactions, which are superimposed on the malicious sales transactions. This allows tax authorities to identify malicious transactions in an easy manner. The three steps detailed in this section make the central theme for $\it{circular\,trading}$. Dealers who commit this fraud often adds up more complexity to the problem by exploiting the way VAT system works in a multi-jurisdictional trading. However, the concept of goods circling around in a cycle or a circular fashion remains the same.

\section{Related Work}

Most of the work on $\it{circular\,trading}$ are concentrated on stock market trading. In \cite{4ref}, \cite{5ref}, \cite{6ref}, \cite{7ref} and \cite{8ref} the authors have investigated on $\it{circular\,trading}$ and other related collusion techniques used in stock market trading. A brief overview on some of these techniques is given below.

In \cite{4ref}, a graph clustering algorithm is devised for detecting collusion sets in stock markets. A novel feature of this approach is the use of Dempster–Schafer theory of evidence to combine the candidate collusion sets. In \cite{7ref}, a method is proposed to detect the potential collusive cliques involved in an instrument of future markets. In \cite{8ref}, the authors introduced complicity functions, which are capable of identifying the intermediaries in a group of actors, avoiding core elements that have nothing to do with the group. 

To the best of our knowledge, no formal techniques have been devised to handle $\it{circular\,trading}$ in taxation system. However, there are few existing works done to fight tax-evasion. In \cite{9ref}, the authors presented a technique relying upon statistical methods for detecting Value Added Tax evasion by Kazakhstani legal entities. Starting from feature-selection, they performed an initial exploratory data analysis using Kohonen self-organizing maps, which allowed them to make basic assumptions on the nature of tax compliant companies. Then they selected a statistical model and proposed an algorithm to estimate its parameters in an unsupervised manner. In \cite{10ref}, the authors presented a case study of a pilot project developed to evaluate the use of data mining in audit selection for the Minnesota Department of Revenue.

In Section III, we reduce the problem in hand to a graph-theoretical problem of \emph{`Deleting cycles from a weighted directed graph in such a way that the difference between the weights of the highest-weighted-edge and the lowest-weighted-edge in the cycle is minimized.'} Several approaches are available in the literature to detect and delete cycles from a directed graph. We observed that most of these works are motivated from some real world problems, just as in our case it is to scrutinize tax-evasion done by $\it{circular\,trading}$. To the best of our knowledge, no work has been proposed in the past for deleting cycles from a graph, which is similar to the technique we describe in this paper. Traditional methods rely on depth-first searches (DFS) \cite{10.1ref}, exploiting the fact that a graph has a cycle iff DFS finds a so-called back edge. Recent works on this topic include distributed algorithms, which aim to maintain an acyclic graph when new links are added to an initially acyclic graph \cite{11ref}. In \cite{12ref}, the authors introduced a new problem: cycle detection and removal with vertex priority. It proposes a multi-threading iterative algorithm to solve this problem for large-scale graphs on personal computers. In \cite{13ref} authors considered the problem of detecting a cycle in a directed graph that grows by arc insertions, and the related problems of maintaining a topological order and the strongly connected components in such graphs.

\section{Problem Definition}

In this section, we define the problem formally using graph theoretic terminologies and give a brief overview on the methodology used for handling the same. A thorough description of the algorithm along with its correctness and time complexity is given in the next section.

Table I shows a snapshot of the dataset used. `ID' is the unique identity number of a dealer. `Seller's ID' and `Buyer's ID' shows the direction of the flow of goods, `Time' gives the exact time of the transaction including the date, and the variable `Value' is the amount of tax paid by the buyer to seller. For example, the second row in Table $1$ can be interpreted as a dealer with ID $a$ selling goods to a dealer with ID $b$ on January $14^{th}$ of $2015$ at local time $1$:$01$:$54$ $pm$ and the buyer, dealer with ID $b$, gives a tax of \rupee~$15,000$ to the seller.

We denote the system of all transactions using a weighted directed graph $G=(V,E)$. Here $V$, which is the vertex set, is a set containing the ID's of all dealers in the transactions. A transaction is defined using a weighted directed edge, and the set of all these edges are denoted by $E$. The weight on any edge is a $2$-tuple of its corresponding `Value' and `Time' attribute values, $(Value,Time)$. So the second row in Table $1$ can be translated as a directed edge $\vec{ba}$ with weight $(15000,2015/01/14/ 13$:$01$:$54)$. Note that graph $G$ may contain multiple edges but no self loops. All multiple edges can be uniquely identified using the `Time' attribute in its weight since we can safely assume that no two similar transactions between two dealers occur exactly at the same time, $i.e.,$ a dealer $A$ cannot make two separate sales transactions to a dealer $B$ at the same time. The same applies to two purchase transactions by dealer $A$ from dealer $B$.

\begin{table}[ht]
\caption{Sales transactions dataset}
\begin{center}
\begin{tabular}{|c|c|c|c|c|}
\hline
Serial.No. & Seller's ID & Buyer's ID & Time & Value in \rupee\\
\hline
1 & m & n & 2015/01/14/10:30:44 & 10000\\
\hline
2 & a & b & 2015/01/14/13:01:54 & 15000\\
\hline
3 & x & y  & 2015/01/15/09:02:52 & 12000\\
\hline
4 & y & m & 2015/01/15/10:09:11 & 14000\\
\hline
5 & b & k & 2015/01/16/10:10:10 & 10000\\
\hline
\end{tabular}
\end{center}
\label{salestransactions}
\end{table}

As mentioned in the last Section I, $\it{circular\,trading}$ results in the formation of undesired flow of goods in a circular fashion, which we call as cycles in graph theoretic terms. The problem of removing these cycles is important as the tax authorities can easily detect the malicious transactions once the cycles are removed. Note that deleting an edge from a cycle results in the absence of that cycle from the graph. The order in which we delete cycles is significant since different order of edge deletion produces different directed acyclic graphs ($DAG$s) at the end. This is due to the simple fact that different cycles may share one or more edges among each other.

\begin{figure}[h!]
\begin{center}
  \caption{Cycle deletion}
\includegraphics[scale=0.52]{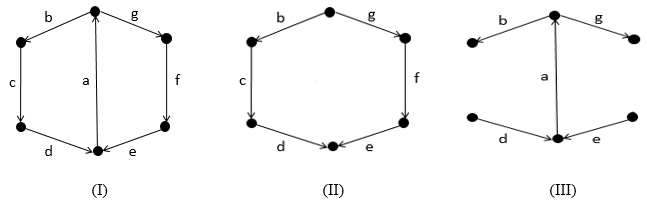}
\label{graph_table3}
\end{center}
\end{figure} 

For example, as illustrated in Figure $3$, if a graph (given in (I)) contains two cycles that share a common edge $a$, $viz.$ $(a,b,c,d,a)$ and $(a,g,f,e,a)$, deleting edge $a$ results in the formation of a different $DAG$ (as given in (II)) from the $DAG$ formed by deleting one edge each from each cycle that is not edge $e$, as given in (III). Hence, we chose an ordering technique for edge deletions using the ideas in Observation $1$, which was given by the taxation authorities.

$\bf{Observation\,1.}$
\emph{In $\it{circular\,trading}$ a dealer fabricates sales and purchases information between himself and the fictitious dealers such that the $\it{input\,tax}$ and the $\it{output\,tax}$ due to these illegitimate transactions are almost the same, ($i.e.$, $\it{tax\,payable}$ on the illegitimate transactions are nullified).}

The $Value$ parameter of the three transactions shown in Figure $2$ of Section I illustrates $Observation\,1$. A careful study of this observation naturally results in deleting cycles in the following particular order:

\emph{`Delete cycles in such a way that the difference between the tax values of the highest-tax-valued-edge in the cycle, (where, `Tax value' is the second element in the $2$-tuple denoting the weight of an edge), and the lowest-tax-valued-edge in the cycle is minimized.'}

This order of cycle deletion makes sense, since, for a dealer in any illegitimate cycle the tax he pays on an illegitimate purchase transaction should be almost the same as the tax he collects on an illegitimate sales transaction. This makes the net tax liability due to illegitimate transactions nearly zero.

\section{Design and analysis of the algorithm to delete cycles from a weighted directed graph in a particular order}

The entire technique of deleting cycles is covered in algorithms $1$, $2$ and $3$. Algorithm $1$ invokes a function defined in algorithm $2$, which in turn invokes a function defined in algorithm $3$. We give the complete algorithm, a brief overview of the same, along with its proof of correctness and time complexity analysis in this section. First of all, let us define few useful terminologies:\

\begin{itemize}

\item If there exist multiple edges from vertex $x$ to vertex $y$, then $max(e_{xy})$ denotes the edge with the maximum $Value$ among all edges directed from $x$ to $y$.

\item $\it{Critical\,edge}$ of a path $P$ or a cycle $C$ in a graph is an edge in the corresponding path or the cycle with the minimum $Value$. We denote it by $\gamma_{P}$ or $\gamma_{C}$, respectively.

\item $\it{Maxflow\,path}$ from a vertex $x$ to a vertex $y$ in a graph is the path with the $Value$ of its $\it{Critical\,edge}$  being the maximum among all the paths from vertex $x$ to vertex $y$. We denote it by $\mu_{xy}$. Note that vertices $x$ and $y$ cannot be the same, in which case we have a cycle and not a path. Hence, in such cases where $x=y$, we consider the $``$path$"$, say $\mu_{xx}$ (or $\mu_{yy}$), to be an unreachable path with the $Value$ of its $\it{Critical\,edge}$ equals $+\infty$, $i.e.$, $Value(\gamma_{\mu_{xx}}) = +\infty$.

\item $\it{Flow\,value}$ of a path $P$ or a cycle $C$ in a graph is the difference between the $Value$ of the maximum-valued-edge and the $Value$ of the minimum-valued-edge (minimum-valued-edge is same as $\it{Critical\,edge}$) in the path or the cycle. We denote it by $\phi_{P}$ or $\phi_{C}$, respectively.
\end{itemize}

\subsection{Algorithm overview}

\begin{itemize}

\item In algorithm $3$, we find the $\it{Maxflow\,path}$ between two vertices, from vertex $v$ to vertex $u$, in the input graph $G^{''}$, $i.e.$ the path $\mu_{vu}$, and returns it to algorithm $2$.\\

Here we describe the main steps involved in algorithm $3$. We start by initializing two vectors, $viz.,$ dist$[]$ and parent$[]$. Vector parent$[]$ is initialized to null for all the vertices, while vector dist$[]$ is initialized to $-\infty$ for all vertices except for the source vertex $v$, dist$[v] = +\infty$. Then all vertices in the vertex set of the input graph $G^{''}$ is inserted into a priority queue (max-heap) based on their dist$[]$ values. Then, we delete the vertex in the queue with the highest dist$[]$ value, and during each such deletion the dist$[]$ and parent$[]$ vectors of the deleted vertex's outgoing-neighbors (vertex $n$ is an outgoing-neighbor of a vertex $x$ if the graph contains an edge directed from vertex $x$ to vertex $n$) are updated as explained below.\\

Let us call the deleted vertex as vertex $ver$. The dist$[]$ and parent$[]$ values of any outgoing-neighbor of vertex $ver$ are updated if we find a better path from the source vertex $v$ to the corresponding outgoing-neighbor. In other words, both vectors of an outgoing-neighbor of vertex $ver$ are updated if the $Value$ of the $\it{Critical\,edge}$ in the new path from vertex $v$ to the outgoing-neighbor $via$ vertex $ver$ is greater than the current dist$[]$ value of the outgoing-neighbor. Note that vector dist$[]$ is updated with the $Value$ of the $\it{Critical\,edge}$ in the new path and vector parent$[]$ is updated with the edge between $ver$ and its outgoing-neighbor. The process of deleting vertices from the queue will continue until the queue becomes empty. Once all the vertices are deleted from the queue, for any vertex $x$ belonging to $G^{''}$, dist$[x]$ represents the $Value$ of the $\it{Critical\,edge}$ in the $\it{Maxflow\,path}$ $\mu_{vx}$ in graph $G^{''}$. The $\it{Maxflow\,path}$ $\mu_{vu}$ is returned to algorithm $2$ by backtracking from the vertices present in vector parent$[u]$ to vertex $v$.

\begin{figure}[h!]
\begin{center}
\captionsetup{justification=centering}
  \caption{Cycle formation}
\includegraphics[scale=0.7]{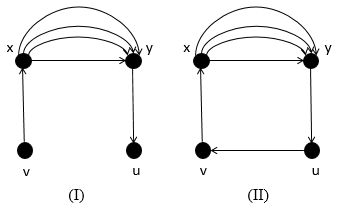}
\label{graph_table4}
\end{center}
\end{figure}

\item Algorithm $2$ takes graph $G^{'}$ and an edge $e$ as input, where edge $e$ is directed from vertex $u$ to vertex $v$. This algorithm removes a cycle $C$, which contains edge $e$, from graph $G^{'}$ such that its $\it{Flow\,value}$ $\phi_{C}$ is the minimum among the $\it{Flow\,values}$ of all the cycles containing edge $e$.\\

It is important to note that the addition of an edge can give rise to several cycles as the graph may contain multiple edges. For example, in Figure $4$, the addition of an edge from vertex $u$ to vertex $v$ in the graph given in (I) will create $4$ different cycles as shown in (II). Hence, we need to decide which cycle is to be deleted before the other. In algorithm $2$, we delete a cycle according to the order defined in Section III, $i.e.$, \emph{`Delete cycles in such a way that the difference between the tax values of the highest-tax-valued-edge and the lowest-tax-valued-edge in the cycle is minimized.'}\\

Here, we invoke algorithm $3$ using graph $G^{''}$, where $G^{''}$ is a copy of the input graph $G^{'}$, and the vertices $u$ and $v$ as parameters. Recall that algorithm $3$ returns the $\it{Maxflow\,path}$ $\mu_{vu}$ of $G^{''}$, and we store it as path $P$. After adding path $P$ to a set $S$, we delete all the edges from graph $G^{''}$ whose $Value$ is greater than or equal to the $Value$ of the maximum-valued-edge in path $P$. We continue this process until no cycles are left in $G^{''}$. At this point, set $S$ contains different paths from vertex $v$ to vertex $u$. Then we add the edge $e$, $i.e.$, the edge from vertex $u$ to vertex $v$, to each of the paths stored in set $S$. It is easy to see that, after the addition of edge $e$, set $S$ contains only cycles in it. Then we find the cycle, say cycle $P_{min}$ in set $S$, whose $\it{Flow\,value}$ $\phi_{P_{min}}$ is the minimum among all the cycles present in $S$. Finally, we remove the cycle $P_{min}$ by subtracting a value equal to $\phi_{P_{min}}$ from each of the edges in $P_{min}$, and the resultant graph is returned to algorithm $1$.

\/*
In this algorithm, we invoke algorithm $3$ using graph $G^{''}$, where $G^{''}$ is a copy of the graph $G^{'}$, and the vertices $u$ and $v$ as parameters (refer to Step $5$). Recall that algorithm $3$ returns the $\it{Maxflow\,path}$ $\mu_{vu}$ of $G^{''}$, and in Step $5$ of this algorithm we store it as path $P$. After adding path $P$ to a set named as $S$ (in Step $6$), we delete all edges from graph $G^{''}$ whose $Value$ is greater than or equal to the $Value$ of the  maximum-valued-edge in the path $P$ (refer to Step $7$). We continue this process in the while loop given in steps $4$ to $8$ until no cycles are left in $G^{''}$. Note that, once we are out of the loop, set $S$ contains different paths from vertex $v$ to vertex $u$. In Step $9$, we add the edge $e$, $i.e.$, the edge from vertex $u$ to vertex $v$, to each of the paths stored in set $S$. It is easy to see that after the addition of edge $e$ set $S$ now contains only cycles in it. In Step $10$, we find the cycle $P_{min}$ in set $S$ whose $\it{Flow\,value}$, $\phi_{P_{min}}$, is the minimum among all the cycles present in $S$. Finally, in order to remove the cycle $P_{min}$, a value equal to $\phi_{P_{min}}$ is removed from each edge of $P_{min}$ in Step $11$ and the resultant graph is returned to algorithm $1$.
*/

\item In algorithm $1$, the input graph has all its edges sorted in the increasing order of time. It invokes algorithm $2$, which deletes a cycle in the proposed order, until no cycles are left in the graph and the resulting directed acyclic graph is the desired output graph.\\

We consider the edge set of the graph as a queue and starts deleting elements from it. Note that always the least recent edge is deleted from the edge set as all its edges are arranged in chronological order. The deleted edges are then inserted into a new graph, say graph $G^{'}$, in the same order as they are deleted and whenever a cycle is detected in the new graph due to the addition of an edge, the function defined in algorithm $2$ is invoked with the graph $G^{'}$ and the most recently inserted edge as parameters. Algorithm $2$ will delete the cycle in the specific order mentioned before, and this process will continue until there exists no cycle in graph $G^{'}$. 

\end{itemize}

\subsection{Correctness of the algorithm}
In this section, we prove the correctness of the proposed algorithm.

Let vertex $v$ be the vertex deleted in some iteration of Step $5$ in algorithm $3$. One can easily verify that after the execution of Step $7$, value of the vector dist$[]$ for any outgoing-neighbor $n$ of vertex $v$ can be defined by the following formula:

\small
\begin{equation}
dist[n] = MAX\Bigg(dist[n], MIN\bigg(dist[v], Value\Big(max(e_{vn})\Big)\bigg)\Bigg)
\end{equation}
\normalsize

where $MAX()$ and $MIN()$ functions return the maximum and minimum values, respectively.

                            
\begin{lemma}
After the execution of algorithm $3$, for each $m \in V^{''}$, dist$[m]$ = $Value(\gamma_{\mu_{vm}})$, where $\gamma_{\mu_{vm}}$ represents the $\it{Critical\,edge}$ in the $\it{Maxflow\,path}$ $\mu_{vm}$.
\end{lemma}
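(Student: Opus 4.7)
The plan is to prove this by induction on the order in which vertices are removed from the max-heap in Step 5 of algorithm $3$, following the template of the classical correctness proof of Dijkstra's algorithm but adapted to the max-min (bottleneck) recurrence captured by equation $(1)$. Let me denote by $U$ the set of vertices that have already been extracted at any given moment, so that the induction hypothesis reads: \emph{for every $m \in U$, dist$[m] = Value(\gamma_{\mu_{vm}})$}, and once extracted, dist$[m]$ is never altered again (since algorithm $3$ only relaxes along out-edges of the currently extracted vertex, never into it).

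The base case is immediate: the first vertex popped is $v$ itself, because dist$[v] = +\infty$ dominates the initial $-\infty$ at every other vertex, and by the convention $Value(\gamma_{\mu_{vv}}) = +\infty$ stated in the definition of $\it{Maxflow\,path}$, equality holds trivially. For the inductive step I would fix a vertex $m \notin U$ that is about to be extracted and prove both inequalities. The upper bound $\text{dist}[m] \le Value(\gamma_{\mu_{vm}})$ is the easier direction: whenever dist$[m]$ is updated through $(1)$ from some already-extracted neighbor $x$, the new value equals $\min(\text{dist}[x], Value(max(e_{xm})))$, which by the inductive hypothesis is exactly the bottleneck of a concrete $v$-to-$m$ path; so dist$[m]$ is always the bottleneck of some real path and therefore at most the bottleneck of the best such path $\mu_{vm}$.

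The harder direction is $\text{dist}[m] \ge Value(\gamma_{\mu_{vm}})$, and this is where the main obstacle lies. I would argue by contradiction: suppose dist$[m] < Value(\gamma_{\mu_{vm}})$ at the moment $m$ is popped, and let $w$ be the first vertex along $\mu_{vm}$ that is not yet in $U$ (such a $w$ must exist — possibly $w = m$ — because $v \in U$). Let $x$ be its predecessor on $\mu_{vm}$; then $x \in U$, and when $x$ was extracted it relaxed the edge $\overrightarrow{xw}$, setting
\begin{equation}
\text{dist}[w] \ge \min\bigl(\text{dist}[x],\ Value(max(e_{xw}))\bigr) \ge Value(\gamma_{\mu_{vm}}),
\end{equation}
where the second inequality holds because both $\text{dist}[x] = Value(\gamma_{\mu_{vx}})$ (by induction) and $Value(max(e_{xw}))$ are at least the global bottleneck of $\mu_{vm}$. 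Since the heap pops the vertex of maximum dist$[]$ value and it chose $m$ over $w$, we get $\text{dist}[m] \ge \text{dist}[w] \ge Value(\gamma_{\mu_{vm}})$, contradicting the assumption.

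A subtle point I want to handle carefully is the presence of multi-edges: because algorithm $3$ only ever relaxes through $max(e_{xw})$, the ``canonical'' $\it{Maxflow\,path}$ $\mu_{vm}$ used in the argument should be chosen to use the heaviest parallel edge between any two consecutive vertices, which is without loss of generality since replacing a non-heaviest edge with $max(e_{xw})$ can only increase the bottleneck. Once this is in place, the critical-edge extraction of $\mu_{vm}$ is backtracked through parent$[]$ verbatim, completing the proof.
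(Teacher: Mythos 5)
Your proof is correct and follows essentially the same route as the paper's: an induction on the extraction order from the max-heap, with the lower bound obtained by looking at the first not-yet-extracted vertex $w$ on the $\it{Maxflow\,path}$, using the relaxation performed when its predecessor was extracted together with the max-heap extraction rule, exactly as in the paper's inductive step. Your handling of the other direction (dist$[m]$ is always the bottleneck of some actual $v$--$m$ path) and the explicit remark about parallel edges are slightly more careful than the paper's contradiction argument, but the substance of the proof is the same.
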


\begin{proof}
Assume that set $O$ contains all vertices deleted from the Queue Q in Step $5$ of algorithm $3$. 
We use induction on $|O|$ to prove $Lemma\,3.1$.\

\noindent $Base\,case\,(|O|=1):$ Here we prove the lemma for the first vertex inserted into set $O$, $i.e.,$ the first vertex which is deleted from the queue Q. Initially, since dist$[v] = +\infty$ and dist$[w] = -\infty$, $\forall w \in V^{''} \setminus v$, dist$[v] >$ dist$[w]$. Therefore, $v$ is the first vertex deleted from Q in Step $5$ of algorithm $3$. Clearly, $Lemma\,3.1$ holds in this case as the $\it{Maxflow\,path}$ $\mu_{vv}$, where the source vertex and the destination vertex are the same, does not exists as per our definition of a $\it{Maxflow\,path}$ given in the beginning of this section. Hence, for the $Base\,case$, $Lemma\,3.1$ holds and dist$[v] = +\infty = Value(\gamma_{\mu_{vv}})$.

\noindent $Inductive\,hypothesis:$ Let $x$ be the last vertex added to set $O$, and assume $O' = O \cup \{x\}$. In this case, the $Inductive\,hypothesis$ states that $\forall y \in O$, dist$[y] = Value(\gamma_{\mu_{vy}})$.

\noindent $Inductive\,step:$ The inductive proof is complete if we show that dist$[x]$ = $Value(\gamma_{\mu_{vx}})$.

Let $v, a_{1}, a_{2}, \cdots a_{k}, m, n, c_{1}, c_{2}, \cdots c_{l}, x$ be the $\it{Maxflow\,path}$ $\mu_{vx}$ from vertex $v$ to vertex $x$. Assume that $v, a_{1}, a_{2}, \cdots a_{k}, m \subseteq O$ and $n \in V{''} - O$. By $Inductive\,hypothesis$, dist$[m] = Value(\gamma_{\mu_{vm}})$. Let $P$ and $P'$ denote the sub-paths $v, a_{1}, a_{2}, \cdots a_{k}, m$ and $v, a_{1}, a_{2}, \cdots a_{k}, m, n$ of the $\it{Maxflow\,path}$ $\mu_{vx}$, respectively. 
Clearly, $Value$ of the $\it{Critical\,edge}$ of the sub-path $P'$ is equal to:\\

\begin{align}
Value(\gamma_{P'}) = MIN \Big( Value \big(\gamma_{P} \big), Value \big(max(e_{mn}) \big) \Big) \\\leq MIN \Big( Value \big(\gamma_{\mu_{vm}} \big), Value \big(max(e_{mn}) \big) \Big) \\ \implies Value(\gamma_{P'}) \leq MIN \Big( dist[m], Value \big(max(e_{mn}) \big) \Big)
\end{align}

Since vertex $n$ is an outgoing-neighbor of vertex $m$, after the deletion of vertex $m$ from the queue, dist$[n]$ gets updated with the following result as derived from $Formula\,1$:
\small
\begin{equation*}
dist[n] = MAX\Bigg(dist[n], MIN\bigg(dist[m], Value\Big(max(e_{mn})\Big)\bigg)\Bigg)
\end{equation*}
\normalsize

\noindent Using $Inequality\,4$, the previous result can be rewritten as:
\small
\begin{equation*}
dist[n] \geq MAX\bigg(dist[n], Value \Big(\gamma_{P'} \Big) \bigg)
\end{equation*}
\normalsize

Consequently, dist$[n] \geq Value(\gamma_{P'})$. Since $P'$ is a sub-path of the $\it{Maxflow\,path}$ $\mu_{vx}$, $Value(\gamma_{\mu_{vx}}) \leq Value(\gamma_{P'})$. Hence, dist$[n] \geq Value(\gamma_{\mu_{vx}})$. In Step $5$ of algorithm $3$, we delete the vertex with the highest dist$[]$ value and as vertex $x$ is deleted from the queue before vertex $n$, dist$[x] \geq$ dist$[n]$ which implies dist$[x] \geq Value(\gamma_{\mu_{vx}})$.

If dist$[x] > Value(\gamma_{\mu_{vx}})$, then, the $Value$ of the $\it{Critical\,edge}$ in the path formed by following the sequence of vertices deleted from the queue starting at vertex $v$ and ending at vertex $x$, is greater than $Value(\gamma_{\mu_{vx}})$. This means that path $\mu_{vx}$ is not a $\it{Maxflow\,path}$ from vertex $v$ to vertex $x$, which contradicts our assumption. Therefore, dist$[x] \ngtr Value(\gamma_{\mu_{vx}})$ which implies dist$[x] = Value(\gamma_{\mu_{vx}})$. Hence $Lemma\,3.1$ holds true for vertex $x$.

This completes the proof of $Lemma\,3.1$.
\end{proof}

\begin{corollary}
The MAX\_MIN() function defined in algorithm $3$ finds the $\it{Maxflow\,path}$ from the source vertex $v$ to any vertex $m$ in the input graph $G^{''}$ which can be retrieved by backtracking from the vector parent$[m]$ to the vertex $v$.
\end{corollary}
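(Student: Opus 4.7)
The plan is to leverage Lemma 3.1 and piggyback on its inductive structure, showing that the parent$[]$ vector maintains, as an invariant, the predecessor of each settled vertex along a path whose critical-edge value matches its dist$[]$ value. Once this invariant is established, the corollary follows in one line: by Lemma 3.1, dist$[m] = Value(\gamma_{\mu_{vm}})$, so the path reconstructed by backtracking through parent$[m], \text{parent}[\text{parent}[m]], \ldots, v$ has critical-edge value equal to $Value(\gamma_{\mu_{vm}})$, which by definition makes it a $\it{Maxflow\,path}$ from $v$ to $m$.

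First I would make the invariant precise. Let $R(m)$ denote the sequence of vertices obtained by starting at $m$ and repeatedly moving to the vertex stored in parent$[\cdot]$ until reaching $v$, together with the edges $max(e_{\cdot,\cdot})$ recorded at each step. The claim to prove by induction on the order in which vertices are removed from the priority queue in Step $5$ of algorithm $3$ is: for every vertex $m$ that has been deleted, $R(m)$ is a well-defined simple path from $v$ to $m$ in $G^{''}$, and the $Value$ of its $\it{Critical\,edge}$ equals dist$[m]$.

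Next I would carry out the induction. The base case is the source vertex $v$ itself, where parent$[v]$ is null and the statement is vacuous. For the inductive step, let $x$ be the vertex just removed from the queue. By the update rule (Formula 1) combined with the fact that dist$[x]$ attains its final value no later than when $x$ is removed, dist$[x]$ was set during the processing of some previously-removed vertex $w$, and at that moment parent$[x]$ was set to record the edge $max(e_{wx})$. By the inductive hypothesis applied to $w$, the backtracked sequence $R(w)$ is a simple path from $v$ to $w$ with critical-edge value dist$[w]$. Appending the edge $max(e_{wx})$ gives $R(x)$, which is still simple because $x$ had not yet been settled when $w$ was processed and therefore does not occur in $R(w)$; its critical-edge value is $\min\bigl(\text{dist}[w], Value(max(e_{wx}))\bigr) = \text{dist}[x]$ by the update rule. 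Combining this with Lemma 3.1 gives $Value(\gamma_{R(x)}) = \text{dist}[x] = Value(\gamma_{\mu_{vx}})$, so $R(x)$ realizes the optimum and is therefore a $\it{Maxflow\,path}$ from $v$ to $x$.

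The main obstacle I anticipate is the simplicity/acyclicity of the reconstructed path $R(x)$. One has to rule out the possibility that a later overwrite of parent$[x]$ references a vertex $w$ such that $x$ itself already appears inside $R(w)$, which would make backtracking loop. The cleanest way to handle this is to exploit the monotone ordering induced by the max-heap: parent$[x]$ is only (re)written during the processing of a vertex $w$ with dist$[w] \ge \text{dist}[x]$, and such a $w$ is removed from the queue strictly before $x$; hence every ancestor of $x$ in the parent chain is already settled when it is recorded, while $x$ is still unsettled at that moment. This temporal ordering forbids $x$ from appearing in $R(w)$, giving simplicity for free. With this observation in hand, the corollary is an immediate consequence of the invariant together with Lemma 3.1.
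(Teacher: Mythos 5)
Your proposal is correct and follows essentially the same route as the paper: the paper's proof is a one-line appeal to the fact that parent$[]$ is updated exactly when dist$[]$ is, combined with Lemma $3.1$, which is precisely the coupling your invariant formalizes. Your version simply fills in the details the paper leaves implicit (the induction on settlement order and the simplicity of the backtracked parent chain), so it is a more rigorous rendering of the same argument rather than a different one.
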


\begin{proof}
The above corollary is true since the vector parent$[]$ is updated in Step $7$ of algorithm $3$ iff dist$[]$ is updated, and according to $Lemma\,3.1$ for each $m \in V^{''}$, dist$[m]$ = $Value(\gamma_{\mu_{vm}})$.
\end{proof}

\/*

\begin{corollary}
Suppose $M(P)$ denotes the maximum $Valued$ edge in a path $P$ in any graph. Then there exists no path $P^{'}$ with $Value(M(P^{'})) \geq Value(M(\mu_{xy}))$ and $\phi_{P^{'}} < \phi_{\mu_{xy}}$, where $\mu_{xy}$ is a $\it{Maxflow\,path}$ from vertex $x$ to vertex $y$ in the input graph $G^{'}$ of algorithm $2$ and $P'$ is any other path in $G^{'}$ from $x$ to $y$, and $\phi_{P^{'}}$ is the $\it{Flow\,value}$ of path $P^{'}$.
\end{corollary}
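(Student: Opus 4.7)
The plan is to proceed by contradiction, using only the definitions of $\it{Maxflow\,path}$ and $\it{Flow\,value}$ together with simple arithmetic; no appeal to the algorithm's internals is needed beyond what Lemma~3.1 and Corollary~3.2 already give. Specifically, I would suppose that such a path $P'$ from $x$ to $y$ exists, satisfying simultaneously
\[
Value(M(P')) \geq Value(M(\mu_{xy})) \quad \text{and} \quad \phi_{P'} < \phi_{\mu_{xy}}.
\]

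Next, I would expand both $\it{Flow\,values}$ using the definition, writing $\phi_{P'} = Value(M(P')) - Value(\gamma_{P'})$ and $\phi_{\mu_{xy}} = Value(M(\mu_{xy})) - Value(\gamma_{\mu_{xy}})$. Substituting into the second assumed inequality and rearranging yields
\[
Value(\gamma_{P'}) > Value(\gamma_{\mu_{xy}}) + \bigl(Value(M(P')) - Value(M(\mu_{xy}))\bigr).
\]
The first assumed inequality makes the bracketed term nonnegative, so this collapses to the clean conclusion $Value(\gamma_{P'}) > Value(\gamma_{\mu_{xy}})$.

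Finally, I would invoke the definition of $\it{Maxflow\,path}$: by construction $\mu_{xy}$ is the path from $x$ to $y$ whose $\it{Critical\,edge}$ has the largest $Value$ among all $x$-to-$y$ paths in $G'$, so in particular $Value(\gamma_{\mu_{xy}}) \geq Value(\gamma_{P'})$. This contradicts the strict inequality just derived, and therefore no such $P'$ can exist.

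The argument is essentially a one-line manipulation, so I do not expect any real obstacle; the only subtlety worth being careful about is that the hypothesis $Value(M(P')) \geq Value(M(\mu_{xy}))$ is used precisely to absorb the $M$-difference into the same direction as the $\gamma$-difference, which is what turns the assumed $\phi$-inequality into a violation of the maxflow property. If the paper had instead allowed $Value(M(P'))$ to be smaller than $Value(M(\mu_{xy}))$, the argument would break, which explains why this asymmetric hypothesis appears in the statement.
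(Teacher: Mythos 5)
Your proposal is correct and matches the paper's own proof: both derive $Value(\gamma_{P'}) > Value(\gamma_{\mu_{xy}})$ from the two hypotheses via the definition of $\it{Flow\,value}$, and then contradict the defining property of the $\it{Maxflow\,path}$ $\mu_{xy}$. You merely spell out the intermediate algebra that the paper leaves implicit, and your closing remark about why the hypothesis $Value(M(P')) \geq Value(M(\mu_{xy}))$ is essential is an accurate observation.
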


\begin{proof}
Recall that $\phi_{P} = \big(Value(M(P)) - Value(\gamma_{P})\big)$, where $\gamma_{P}$ is the $\it{Critical\,edge}$ in path $P$, which is same as the minimum-$Valued$-edge in $P$. Therefore, if $\phi_{P^{'}} < \phi_{\mu_{xy}}$ and $Value(M(P^{'})) \geq Value(M(\mu_{xy}))$, then $Value(\gamma_{P^{'}}) > Value(\gamma_{\mu_{xy}})$, which is a contradiction to our assumption that $\mu_{xy}$ is a $\it{Maxflow\,path}$ from vertex $x$ to vertex $y$. Hence the corollary is true.
\end{proof}

*/

\begin{lemma}
Let $C_1, C_2, C_3, \cdots, C_{k-1}, C_k$ be the cycles present in set $S$ ordered in the increasing order of their maximum-$Valued$-edges (we denote this ordering as $\vec{O}$) after the execution of algorithm $2$. Then, the order in which these cycles are deleted from graph $G^{''}$ in Step $7$ of algorithm $2$ is in the reverse order of ordering $\vec{O}$, $i.e.$, cycle $C_k$ is deleted at first, then cycle $C_{k-1}$, $\cdots, C_3,\,C_2$ and at last cycle $C_1$.
\end{lemma}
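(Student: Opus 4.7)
The plan is to follow the iterations of the while loop in algorithm $2$ and to show that the $Value$ of the maximum-valued-edge of the maxflow path returned in successive iterations forms a strictly decreasing sequence; reversing this sequence will give exactly the ordering $\vec{O}$. Let $P^{(i)}$ denote the maxflow path returned by algorithm $3$ during the $i$-th iteration, and let $C^{(i)} = P^{(i)} \cup \{e\}$ be the cycle added to $S$ in that iteration. The goal is to identify $C^{(i)}$ with $C_{k-i+1}$, which is exactly the reverse-order claim of the lemma.

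The crucial observation is a direct consequence of Step $7$: during iteration $i$, every edge of $G^{''}$ whose $Value$ is at least the $Value$ of the maximum-valued-edge of $P^{(i)}$ is deleted from $G^{''}$. Hence at the start of iteration $i+1$ every edge still present in $G^{''}$ has $Value$ strictly less than that maximum. In particular, the maximum-valued-edge of $P^{(i+1)}$ has strictly smaller $Value$ than that of $P^{(i)}$, and a short induction on $i$ then shows that the $Values$ of the maximum-valued-edges of $P^{(1)}, P^{(2)}, \ldots, P^{(k)}$ strictly decrease.

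The next step is to transfer this monotonicity to the cycles in $S$. Since every cycle $C^{(i)}$ shares the single edge $e$, the maximum-valued-edge of $C^{(i)}$ is either $e$ or the maximum-valued-edge of $P^{(i)}$, whichever has the larger $Value$. Under the standing uniqueness assumption on transaction $Values$, which is implicit in the strict ordering $\vec{O}$, the already-established strict decrease of the path maxima carries over to the cycle maxima. Thus the insertion order $C^{(1)}, C^{(2)}, \ldots, C^{(k)}$ is precisely the order of strictly decreasing cycle maxima, and its reversal is $\vec{O}$.

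The step that I expect to require the most care is pinning down what ``cycle $C_i$ is deleted in Step $7$'' actually means, since a single execution of Step $7$ may simultaneously destroy several cycles of $G^{''}$. I would resolve this by defining the deletion iteration of a cycle $C \in S$ to be the iteration in which $C$ is inserted into $S$, which is also the first iteration in which its maximum-valued-edge is removed from $G^{''}$, and then appealing to the strict monotonicity above to conclude that this iteration index equals $k - i + 1$, where $i$ is the position of $C$ in $\vec{O}$.
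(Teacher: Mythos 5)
Your argument is correct and rests on the same key observation as the paper's own proof: Step $7$ removes every edge whose $Value$ is at least the maximum $Value$ on the current $\it{Maxflow\,path}$, so the maxima of successive paths (and hence of the cycles placed in $S$) strictly decrease, which makes the insertion/deletion order exactly the reverse of $\vec{O}$. The paper phrases the conclusion as a contradiction between the two orderings while you argue directly by induction over the loop iterations, but this is essentially the same approach.
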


\begin{proof}
It is easy to observe that no two cycles in $\vec{O}$ can have the same maximum-$Valued$-edge. This is due to the fact that in Step $7$ of algorithm $2$ all edges from graph $G^{''}$ are deleted whose $Value \geq Value(M(P))$, where $P$ is the $\it{Maxflow\,path}$ from vertex $v$ to vertex $u$ in $G^{''}$ (note that, later in Step $9$, the operation $P=P\cup \{e\}$ causes this $\it{Maxflow\,path}$ $P$ to become a cycle). Following the definition of $\vec{O}$, $Value(M(C_{1})) < Value(M(C_{2})) < Value(M(C_{3})) < \cdots < Value(M(C_{k-1})) < Value(M(C_{k}))$, where $M(C)$ denotes the maximum-$Valued$-edge in a given cycle $C$. Suppose that $Lemma\,3.3$ is wrong, then, there exists two cycles $C_i$ and $C_l$ such that cycle $C_i$ comes before cycle $C_l$ in $\vec{O}$ and $C_i$ is deleted before $C_l$ from graph $G^{''}$ in Step $7$ of algorithm $2$. Since $C_i$ is deleted before $C_l$, $Value(M(C_l)) < Value(M(C_i))$. Also, since $C_i$ comes before $C_l$ in $\vec{O}$, $Value(M(C_i)) < Value(M(C_l))$ and hence we reach a contradiction. Therefore, $Lemma\,3.3$ holds true.
\end{proof}

\begin{lemma}
Algorithm $2$ deletes a cycle from graph $G^{'}$ with the minimum $\it{Flow\,value}$ among all the cycles in $G^{'}$.\end{lemma}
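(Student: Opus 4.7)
The plan is to fix a minimum-flow-value cycle $C^*$ in $G'$ and exhibit an element $P_i \in S$ for which $\phi_{P_i\cup\{e\}}\le \phi_{C^*}$; since algorithm~2 selects the flow-value minimizer in $S$ and $C^*$ is assumed minimum overall, this forces equality. Because algorithm~1 keeps $G'$ acyclic between calls to algorithm~2, every cycle of $G'$ contains the freshly added edge $e$, so I can write $C^* = P^* \cup \{e\}$ for a $v$-to-$u$ path $P^*$; set $m^* = Value(\gamma_{P^*})$ and let $M^*$ be the $Value$ of the maximum-valued edge of $P^*$.

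Let $G''_1,G''_2,\ldots,G''_k$ be the successive graphs at the top of the while-loop and $P_1,P_2,\ldots,P_k$ the Maxflow paths appended to $S$, so $G''_{i+1}$ is $G''_i$ with every edge of $Value$ at least $Value(M(P_i))$ removed and, by Lemma~3.3, $Value(M(P_i))$ is strictly decreasing in $i$. I will take the witness to be $P_i$ for $i$ the largest index with $P^* \subseteq G''_i$, which is well-defined because $P^* \subseteq G''_1 = G'$. Since $P^*$ is a feasible $v$-to-$u$ path in $G''_i$ and $P_i$ is the Maxflow such path, Corollary~3.2 yields $Value(\gamma_{P_i}) \ge m^*$, hence $\min(Value(\gamma_{P_i}),Value(e)) \ge \min(m^*,Value(e))$.

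For the max-term I will split on why $i$ is the last iteration with $P^* \subseteq G''_i$: either $P^*$ is broken by the edge removal in iteration $i$, in which case some edge of $P^*$ has value $\ge Value(M(P_i))$ and therefore $M^* \ge Value(M(P_i))$; or $i = k$ and the loop terminates with $P^*$ intact, which can only happen because $e$ itself was purged at some iteration $j \le k$, giving $Value(e) \ge Value(M(P_j)) \ge Value(M(P_k))$ and, since $P^*$ survived iteration $k$, also $M^* < Value(M(P_k)) \le Value(e)$. In both cases $\max(Value(M(P_i)),Value(e)) \le \max(M^*,Value(e))$, and plugging the two bounds into
\[
\phi_{P\cup\{e\}} = \max\bigl(Value(M(P)),Value(e)\bigr) - \min\bigl(Value(\gamma_P),Value(e)\bigr)
\]
gives $\phi_{P_i\cup\{e\}} \le \phi_{C^*}$. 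The delicate point I expect to be the main obstacle is precisely the second branch above, where the witness iteration is dictated not by the destruction of $P^*$ but by the removal of $e$; verifying cleanly that $Value(e)$ then dominates both $Value(M(P_i))$ and $M^*$ is what makes the argument go through in full generality, whereas all the remaining steps are direct consequences of Lemma~3.1, Corollary~3.2, Lemma~3.3, and the definition of flow value.
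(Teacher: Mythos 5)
Your proposal is correct, but it proves the lemma by a genuinely different route than the paper. The paper argues by contradiction: it assumes the minimum-flow-value cycle $C_{min}$ is not in $S$, orders the cycles of $S$ by their maximum-valued edges (Lemma 3.3), and splits into three cases according to where $Value(M(C_{min}))$ falls in that ordering, deriving in each case a contradiction with Corollary 3.2 (either the path $C_{min}-\{e\}$ would have beaten the Maxflow path found at some iteration, or the while loop could not yet have terminated). You instead argue directly: you fix an optimal cycle $C^{*}=P^{*}\cup\{e\}$, pick as witness the last iteration $i$ at which $P^{*}$ still survives in $G''_i$, and prove the single inequality $\phi_{P_i\cup\{e\}}\le\phi_{C^{*}}$ by bounding the min-term via Corollary 3.2 ($Value(\gamma_{P_i})\ge Value(\gamma_{P^{*}})$) and the max-term via a two-way split on whether $P^{*}$ is destroyed at iteration $i$ or survives to termination because $e$ itself was purged; minimality of the selection in Step 10 then forces equality. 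Both arguments rest on the same ingredients (Lemma 3.1/Corollary 3.2 plus the strictly decreasing deletion thresholds of Step 7), and both inherit the paper's standing assumption that every cycle of $G'$ passes through the newly added edge $e$. What your version buys is a constructive witness and an explicit treatment of the boundary situation in which $Value(e)$ dominates the cycle maximum and $e$ is removed in Step 7 before $P^{*}$ is ever broken -- a configuration the paper's case analysis passes over somewhat implicitly, since it does not distinguish whether the maximum-valued edge of a cycle in $S$ is $e$ or an edge of the path; the paper's proof, in exchange, is shorter to state once Lemma 3.3 is in place.
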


\begin{proof}
Let $C_{min}$ denote the cycle with the minimum $\it{Flow\,value}$ among all cycles in the graph $G^{'}$ given in algorithm $2$. Note that $G^{'}$ is copied into graph $G^{''}$ in Step $3$. In addition, note that set $S$ contains a set of cycles belonging to $G^{''}$ from which the cycle with the minimum $\it{Flow\,value}$ is found and deleted in steps $10$ and $11$, respectively. In order to prove $Lemma\,3.4$, assume the contradiction that algorithm $2$ does not delete cycle $C_{min}$ from the input graph $G^{'}$ which implies $C_{min} \notin S$. Let $C_1, C_2, C_3, \cdots, C_{k-1}, C_k$ be the cycles in set $S$ ordered in the increasing order of their maximum-$Valued$-edges, $i.e.$, $Value(M(C_{1})) < Value(M(C_{2})) < Value(M(C_{3})) < \cdots < Value(M(C_{k-1})) < Value(M(C_{k}))$, where $M(C)$ denotes the maximum-$Valued$-edge in a given cycle $C$. Note that every cycle in graph $G^{''}$ contains edge $e$, which is directed from vertex $u$ to vertex $v$, since algorithm $2$ is invoked by algorithm $1$ in Step $7$ when the addition of edge $e$ created a cycle in graph $G^{'}$. Now let us complete the proof of $Lemma\,3.4$ using the following $3$ exhaustive cases.

\begin{itemize}

\item $\it{\bf{Case\,1:}}$ $\bf{Value\big(M(C_{k})\big) < Value\big(M(C_{min})\big)}$

According to $Lemma\,3.3$, $C_k$ is the first cycle to be removed from graph $G^{''}$ in Step $7$ of algorithm $2$. By definition, $\it{Flow\,value}$ of any cycle $C$ = $\phi_{C} = \big(Value(M(C)) - Value(\gamma_{C})\big)$. Therefore, in $Case\,1$ where $Value\big(M(C_{min})\big) > Value\big(M(C_{k})\big)$, since $\phi_{C_{min}} < \phi_{C_{k}}$, $Value(\gamma_{C_{min}}) > Value(\gamma_{C_{k}})$. If we remove the common edge $e$ (which is directed from vertex $u$ to vertex $v$ in graph $G^{''}$) from both the cycles, $C_{min} - \{e\}$ and $C_{k} - \{e\}$ are now two paths directed from vertex $v$ to $u$ such that $Value(\gamma_{C_{min} - \{e\}}) > Value(\gamma_{C_{k} - \{e\}})$. As $G^{''}$ is the input graph given to invoke algorithm $3$, according to $Corollary\,3.2$ the path $C_{k} - \{e\}$ found by algorithm $3$ should be a $\it{Maxflow\,path}$ from $v$ to $u$. However, this is not the case as $Value(\gamma_{C_{min} - \{e\}}) > Value(\gamma_{C_{k} - \{e\}})$. Hence $\it{Case\,1}$ is not valid.

\item $\it{\bf{Case\,2:}}$ $\bf{Value\big(M(C_{min})\big) < Value\big(M(C_{1})\big)}$

After the removal of cycle $C_1$, cycle $C_{min}$ will be present in $G^{''}$, because, in Step $7$, when all edges in $G^{''}$ whose $Value \geq$ $Value\big(M(C_{1})\big)$ are deleted, none of the edges in $C_{min}$ are deleted as $Value\big(M(C_{min})\big)$ $<$ $Value\big(M(C_{1})\big)$. The presence of cycle $C_{min}$ implies the presence of the path $C_{min} - \{e\}$, which starts from vertex $v$ and ends in vertex $u$. According to $Lemma\,3.3$, $C_1$ is the last cycle to be found in Step $7$ of algorithm $2$ and this contradicts $Corollary\,3.2$ as there exist the path $C_{min} - \{e\}$. Hence $\it{Case\,2}$ is also invalid.

\item $\it{\bf{Case\,3:}}$ $\bf{Value\big(M(C_{i})\big) < Value\big(M(C_{min})\big) <}$ $\bf{Value\big(M(C_{i+1})\big)}$

This case can easily be proved by using the arguments given in $Case\,1$ and $Case\,2$. After the removal of cycle $C_{i+1}$ from graph $G^{''}$ in Step $7$ of algorithm $2$, cycle $C_{min}$ will still be present in $G^{''}$ since in Step $7$ when all edges in $G^{''}$ whose $Value \geq$ $Value\big(M(C_{i+1})\big)$ are deleted, none of the edges in cycle $C_{min}$ got deleted as $Value\big(M(C_{min})\big)$ $<$ $Value\big(M(C_{i+1})\big)$. Now recall that for any cycle $C$, $\phi_{C} = \big(Value(M(C)) - Value(\gamma_{C})\big)$. Therefore, in this case, $Value\big(M(C_{min})\big) > Value\big(M(C_{i})\big)$ and $\phi_{C_{min}} < \phi_{C_{i}}$ implies $Value(\gamma_{C_{min}}) > Value(\gamma_{C_{i}})$. If we remove the common edge $e$ (which is directed from vertex $u$ to vertex $v$) from both the cycles, $C_{min} - \{e\}$ and $C_{i} - \{e\}$ are now two paths directed from vertex $v$ to $u$ such that $Value(\gamma_{C_{min} - \{e\}}) > Value(\gamma_{C_{i} - \{e\}})$. According to $Lemma\,3.3$, $C_i$ is the cycle found by Step $7$ of algorithm $2$ after the deletion of cycle $C_{i+1}$, but this contradicts $Corollary\,3.2$ as the path $C_{i} - \{e\}$ found by algorithm $3$ is not a $\it{Maxflow\,path}$ from vertex $v$ to vertex $u$ since $Value(\gamma_{C_{min} - \{e\}}) > Value(\gamma_{C_{i} - \{e\}})$. Hence $\it{Case\,3}$ is not valid.

\end{itemize}

This completes the proof of $Lemma\,3.4$.
\end{proof}

\begin{theorem}
Algorithm $1$ produces a directed acyclic graph by deleting all cycles from the graph $G^{'}$ in which the cycle with the minimum $\it{Flow\,value}$ gets deleted before the other cycles.
\end{theorem}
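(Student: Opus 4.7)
The plan is to run induction on the number of edges that Algorithm $1$ has pushed into the growing graph $G'$, and to reduce the global ordering claim to the per-call guarantee already supplied by Lemma $3.4$. First, I would set up the loop invariant: after Algorithm $1$ has completed its handling of the $i$-th edge popped from its chronologically sorted queue (including every inner call to Algorithm $2$ that edge triggers), the current $G'$ is acyclic. The base case $i=0$ is trivial. For the inductive step, let $e$ be the next inserted edge, directed from $u$ to $v$. If inserting $e$ produces no cycle, the invariant carries over. Otherwise Algorithm $1$ invokes Algorithm $2$ on $(G', e)$, and since $G'$ was a DAG immediately before $e$ was added, every cycle now present in $G'$ must use $e$. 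This is precisely the setting under which Lemma $3.4$ operates, so the cycle deleted has the minimum flow value over all cycles of $G'$.

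Next I would handle the inner loop. The key auxiliary claim is that even after the first invocation of Algorithm $2$, any cycle remaining in $G'$ still uses $e$: Step $11$ only reduces edge weights (driving the critical edge of the deleted cycle out of $G'$), it never adds new edges, and so a surviving cycle avoiding $e$ would have existed already in the pre-insertion DAG, contradicting the inductive hypothesis. This means that each subsequent invocation triggered by the same insertion again falls under the hypotheses of Lemma $3.4$, so at every step it deletes the current minimum-flow-value cycle. Since each call strictly removes at least one edge (the critical edge of the cycle being eliminated), the inner loop terminates, $G'$ is acyclic once more, and the invariant is restored. Across all insertions, the final graph returned by Algorithm $1$ is therefore a DAG, and every cycle deletion performed during the entire run is of a minimum-flow-value cycle at the moment of its removal, which is the ordering the theorem asserts.

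The main obstacle I anticipate is precisely the auxiliary claim in the second paragraph: that between successive invocations of Algorithm $2$ for the same freshly inserted edge $e$, every surviving cycle still passes through $e$. Without this, Lemma $3.4$ would only give minimality among cycles through $e$, not among all cycles of the current $G'$, and the theorem's global ordering statement would not follow. My approach to this step would be to exploit the monotonicity of Step $11$—edges are only removed, never added or rerouted—combined with the inductive acyclicity of $G'$ just before the insertion of $e$, so that any cycle in any intermediate $G'$ must contain at least one edge that was absent before $e$ was introduced, and $e$ is the only such edge.
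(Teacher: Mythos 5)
Your proposal is correct and takes essentially the same route as the paper: the paper's own proof is just the observation that Lemma $3.4$ guarantees each call of Algorithm $2$ deletes the minimum-$\it{Flow\,value}$ cycle, and that the while loop of Algorithm $1$ repeats this until $G^{'}$ has no cycles. Your auxiliary invariant---that every cycle surviving between successive invocations of Algorithm $2$ for the same inserted edge $e$ still passes through $e$, so Lemma $3.4$ genuinely applies at each call---is a detail the paper leaves implicit (it is asserted without justification inside the proof of Lemma $3.4$), so your write-up is, if anything, more careful than the original.
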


\begin{proof}
In $Lemma\,3.4$, we have already proved that algorithm $2$ deletes a cycle with the minimum $\it{Flow\,value}$ among all the cycles in graph $G^{'}$. In addition, the while loop defined in Step $6$ of algorithm $1$ invokes algorithm $2$ (in Step $7$) until $G^{'}$ has no cycles left in it. This proves the theorem.
\end{proof}

\subsection{Algorithm analysis}

\begin{theorem}
If $n$ is the number of vertices and $m$ is the number of edges in the input graph $\vec{G}$ given to algorithm $1$, then, algorithm $1$ runs in $\mathcal{O}\big((m+n) \cdot m^2 \cdot log(n)\big)$ time. 
\end{theorem}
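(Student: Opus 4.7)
The plan is to bound the three algorithms from the innermost outward and then aggregate the calls to Algorithm 2 across the outer loop of Algorithm 1 via an amortization argument; the per-call analyses of Algorithm 3 and Algorithm 2 are essentially textbook, and the subtle step is the amortization for Algorithm 1.

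For Algorithm 3 I would observe that MAX\_MIN is the bottleneck (widest-path) analogue of Dijkstra: $\mathrm{dist}[\cdot]$ is kept in a max-heap, each relaxation uses the $\mathrm{MAX}(\cdot,\mathrm{MIN}(\cdot,\cdot))$ update of Formula~(1), and by Lemma~3.1 each vertex leaves the queue exactly once with its final value. With a binary heap the $n$ extract-max and at most $m$ key-update operations each cost $\mathcal{O}(\log n)$, giving $\mathcal{O}((m+n)\log n)$ per call; reconstructing $\mu_{vu}$ by walking $\mathrm{parent}[\cdot]$ adds $\mathcal{O}(n)$ and is absorbed. For Algorithm 2, the while loop of Steps~4--8 invokes Algorithm 3 and then, in Step~7, removes from $G''$ every edge whose value is at least $\mathrm{Value}(M(P))$; since $M(P)$ itself lies in $G''$, the edge count of $G''$ strictly decreases each iteration. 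Hence the loop runs at most $m$ times and $|S|\le m$, and Steps~9--11 scan paths of length at most $n$ and contribute only $\mathcal{O}(nm)$, absorbed into the loop cost. A single call to Algorithm 2 therefore costs $\mathcal{O}\big(m\cdot (m+n)\log n\big)$.

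For Algorithm 1 the outer loop iterates $m$ times over the chronologically ordered edge set; a BFS or DFS from $v$ to $u$ in $\mathcal{O}(m+n)$ decides whether the inserted edge closes a cycle in $G'$, after which Algorithm 2 is called repeatedly until $G'$ is acyclic again. The key step is to \emph{amortize} these calls over the whole execution rather than charge them per insertion (the latter would yield an additional factor of $m$). I would argue that each invocation of Algorithm 2 physically removes at least one edge from $G'$ via the update in Step~11, so with only $m$ edges ever inserted into $G'$ the total number of Algorithm 2 invocations throughout the run of Algorithm 1 is at most $m$. Multiplying this global call count by the per-call bound and absorbing the $\mathcal{O}(m(m+n))$ spent on cycle detection across the outer loop yields the desired $\mathcal{O}\big((m+n)\cdot m^2\cdot\log n\big)$.

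The main obstacle I expect is exactly this amortization: the bound depends on the reading that the subtraction of $\phi_{P_{\min}}$ in Step~11 is paired with physically dropping any edge whose weight falls to zero or below, so that at least the critical edge $\gamma_{P_{\min}}$ leaves $G'$ per call. Under that convention the ``number of surviving edges in $G'$'' is a clean potential whose decrease by one per Algorithm 2 call is exactly what the $m^2$ factor in the claimed bound requires. I would therefore make this edge-removal rule explicit before appealing to the potential argument, since without it a degenerate configuration (for instance a cycle of equal-valued edges, where $\phi_{P_{\min}}=0$) could in principle trigger an Algorithm 2 invocation without any edge disappearing and the aggregate bound would degrade.
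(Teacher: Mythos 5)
Your proposal is correct and follows the same three-level decomposition as the paper: $\mathcal{O}((m+n)\log n)$ per call of Algorithm 3 via the max-heap, an $\mathcal{O}(m)$ bound on the while loop of Algorithm 2 giving $\mathcal{O}((m+n)\,m\log n)$ per call, and an $\mathcal{O}(m)$ bound on the number of Algorithm 2 invocations in Algorithm 1, yielding $\mathcal{O}((m+n)\,m^2\log n)$. The one substantive difference is that you justify the two $\mathcal{O}(m)$ counts, whereas the paper simply asserts them: you note that Step 7 of Algorithm 2 deletes the maximum-valued edge of $P$ from $G''$, so its loop strictly shrinks the edge set, and you amortize the Algorithm 2 calls globally (each call must remove an edge of $G'$), which is needed because a single edge insertion in Algorithm 1 can trigger several calls of Algorithm 2 before $G'$ becomes acyclic -- a point the paper's per-insertion charging glosses over. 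Your caveat about Step 11 is also apt, though note that the pseudocode subtracts $Value(e_{min})$ (the minimum edge value of $P_{min}$) rather than $\phi_{P_{min}}$ as the prose overview states, so the critical edge is driven exactly to zero; the convention that zero-valued edges are then discarded is indeed required, both for your amortization and for termination of Algorithm 1 itself, and making it explicit strengthens rather than departs from the paper's argument.
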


\begin{proof}
In algorithm $3$, if we are using a max heap for deleting the vertices with the largest dist$[]$ value in Q, then, in the worst case it runs in $\mathcal{O}\big((m+n) \cdot log(n)\big)$ time. In algorithm $2$, as one can easily observe, the while loop from steps $4-8$ takes the maximum amount of time. In the worst case, it may run Step $5$ for $\mathcal{O}(m)$ time. Hence, algorithm $2$ runs in $\mathcal{O}\big((m+n) \cdot m \cdot log(n)\big)$ time. Finally, in algorithm $1$, the while loop in steps $3-8$ may run in $\mathcal{O}(m)$ time in the worst case were the addition of edges in Step $5$ creates a cycle in almost all cases. Hence, the total time taken by algorithm $1$ is $\mathcal{O}\big((m+n) \cdot m^2 \cdot log(n)\big)$.
\end{proof}

  \begin{algorithm}
   \caption{Weighted Cycle Deletion}
    \begin{algorithmic}[1]
      \Procedure{WCD}{$\vec{G}=(V,\vec{E})$}
      \Statex
      \LeftComment{$\vec{G}$ is a weighted directed graph with multiple-edges and no self-loops} \Statex
      \LeftComment{Weight on each edge is a tuple with $Value$ and $Time$, $(Value,Time)$} \Statex
      \LeftComment{Edges of graph $\vec{G}$ are stored in edge-set $\vec{E}$ in their chronological order}
      \Statex

        \State $\bf{Initialize}$ $G^{'} = \emptyset$ \Statex \LeftComment{$G^{'}=(V^{'},E^{'})$, hence, $(G^{'} = \emptyset) \implies (V^{'} = E^{'} = \emptyset)$}
        \Statex
        
         \While{($\vec{E} \neq \emptyset$)} 
    \State $e = $ DEQUEUE($\vec{E}$)\Statex \LeftComment{Edge $e$ is the least recent edge} \Statex
    
    \State $E^{'} = E^{'} \cup e$ \Statex \LeftComment{Note that $V^{'}$ also gets updated in the process} \Statex
    
    \While{($G^{'}$ has a cycle)} \LeftComment{DFS is used here}
    \State $G^{'}$ = $\bf{function}$ DELETE\_CYCLE($G^{'},e$)
    \EndWhile
    \Statex
  \EndWhile \Statex
   \LeftComment{Graph $G^{'}$ now contains the desired DAG} \Statex
       \EndProcedure

\end{algorithmic}
\end{algorithm}

 \begin{algorithm}
   \caption{Function definition of DELETE\_CYCLE()}
    \begin{algorithmic}[1]
      \Function{DELETE\_CYCLE}{$G^{'},e$} \Statex
      \LeftComment{edge $e$ is the most recently added edge in $G^{'}$ that formed the cycle} \Statex
       \LeftComment{$Value(e)$ gives the $Value$ of edge $e$ from its ordered $2-$tuple $(Value,Time)$}

       \Statex
            
        \State Let vertex-tuple $(u,v)$ define the directed edge $e$
        \Statex
        \LeftComment{$i.e.$ edge $e$ is directed from vertex $u$ to vertex $v$} \Statex

        \State $\bf{Initialize}$ set $S = \emptyset$ and $G^{''} = G^{'}$
        \Statex \LeftComment{$G^{''}=(V^{''},E^{''})$, hence, $(G^{''} = G^{'}) \implies (V^{''} = V^{'}$ and $E^{''} = E^{'})$}
        \Statex
         \While{($G^{''}$ has a cycle)} \LeftComment{DFS is used here}
    \State $P$ = $\bf{function}$ MAX\_MIN($G^{''},u,v$)
    \Statex
    \LeftComment{$P$ contains the $\it{Maxflow\,path}$ $\mu_{vu} \in G^{''}$}
    \Statex
     
    \State $S = (S \cup P)$ \Statex \LeftComment{$S$ contains a set of ordered tuples, where each tuple denotes a path from $v$ to $u$}
    \Statex
    \State $\bf{Delete}$ edge $e^{''}\in E^{''},$ where, \Statex $Value(e^{''}) \geq Value(e_{max})$ \Statex \LeftComment{$e_{max}$ is the edge with the largest $Value$ in $P$}
    
   \EndWhile
   \Statex
    
    \State $\forall P^{'} \in S$, update $P^{'} = (P^{'} \cup \{e\})$ \Statex \LeftComment{Add edge $e$ to each of the ordered tuple in $S$} \Statex
    
    \State Find $P_{min} \in S$, such that, the $\it{Flow\,value}$ of $P_{min}$ is the minimum among all the cycles present in $S$ \Statex \Statex
    
    
    \LeftComment{$e_{min}$ be the minimum-valued-edge in $P_{min}$}
    \State $\bf{Delete}$ a flow of $Value(e_{min})$ from all the edges of $P_{min} \in G^{'}$ \LeftComment{$i.e.,$ $\forall e \in P_{min} \in G^{'}$, \Statex \hskip 7.35em $Value(e) = (Value(e) - Value(e_{min}))$}
    \Statex
        
    \State $\bf{Return}$ graph $G^{'}$ \Statex

\EndFunction
\end{algorithmic}
\end{algorithm}

  \begin{algorithm}
   \caption{Function definition of MAX\_MIN()}
    \begin{algorithmic}[1]
      \Function{MAX\_MIN}{$(G^{''},u,v)$}
      \Statex
      \LeftComment{Here we use two vectors mapped to each of the vertices in $V^{''}$, $viz.$, dist$[]$ and parent$[]$}
      \Statex
       \LeftComment{$Value(e)$ gives the $Value$ of edge $e$ from its ordered $2-$tuple $(Value,Time)$}
       \Statex
      
        \State $\forall w \in V^{''} \setminus v,$ $\bf{Initialize}$
        dist[w]= $-\infty$, parent$[w] = \emptyset$,
        \Statex \hskip1.7em dist$[v]= +\infty$, parent$[v] = \emptyset$ 
        \Statex
        \State Insert all vertices in $V^{''}$ to Queue $Q$ in decreasing order of their dist$[]$ values\Statex \LeftComment{$\forall x \in V^{''}$ ENQUEUE$(x,Q)$ in decreasing order of dist$[x]$}
        \Statex
        
         \While{($Q \neq \emptyset$)} \Statex
    \State $ver$ = DEQUEUE$(Q)$ \LeftComment{Delete $ver$ from $Q$, where $ver$ is the vertex with the largest dist$[]$ value in $Q$}
         \Statex
     \State Let set $N$ contains all outgoing-neighbors of $ver$
    \Statex
    \LeftComment{outgoing-neighbors of a vertex $v$ are all vertices to which $v$ has an outward directed edge} \Statex

    \State $\forall n \in N$, \Statex
    \hskip3.2em val = minimum( dist$[ver]$, $Value(e_{n})$ ) \Statex
    \LeftComment{$e_{n}$ is the edge with the highest $Value$ among all the edges directed from vertex $ver$ to vertex $n$ }\Statex
    \hskip3.2em $\bf{If}$ dist$[n]$ $<$ val
    $\bf{then}$ 
    \Statex \hskip4.5em dist$[n]$ $=$ val, parent$[n]$ = $e_{n}$

    \Statex
    
    
  \EndWhile
  \Statex
  
    \State $\bf{Return}$ the path $P$ from vertex $v$ to vertex $u$ \Statex
    \LeftComment{Path $P$ can be found by backtracking from the vertices present in parent$[u]$ to vertex $v$} \Statex
    
    \EndFunction

\end{algorithmic}
\end{algorithm}

\section{Case Study}
We had taken up a case in which eight dealers are doing intensive $\it{circular\,trading}$ among themselves. Figure $5$ shows the details of the same in the form of a directed graph with vertices denoting the dealers, and directed edges denoting the direction of transactions along with the total amount of tax paid (in lakh of \rupee, 1 lakh = \rupee~1,00,000) to the seller by the buyer. Note that the weight on each edge in the graph shows the total tax paid by a particular buyer to a particular seller (total tax is the sum of all the taxes involved in multiple transactions between them).

\begin{figure}[h!]
\begin{center}
  \caption{Input graph}
\includegraphics[scale=0.48]{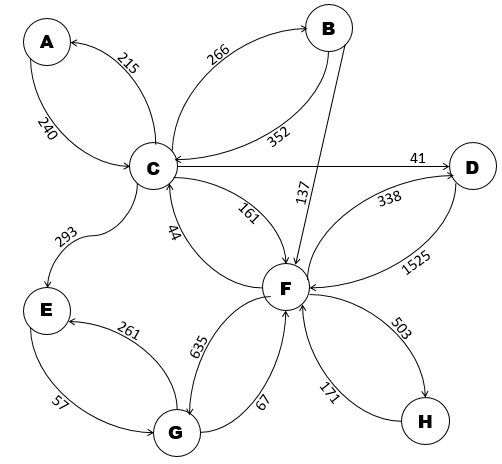}
\label{graph_table5}
\end{center}
\end{figure}

As given in Figure $5$, there are numerous cycles among the group of eight dealers and these cycles are considered as undesirable patterns by domain experts. Cycles are undesired in these transactions since a cycle indicate the buying of the same goods by a dealer that (s)he had previously sold. According to domain experts, this kind of flow of goods is not valid for the particular commodity that the eight people are trading among themselves. Hence, these cycles are indicators for tax-evasion. As mentioned in Section III, the dealers involved in circular trade fabricates the tax-values in such a way that the input tax and output tax due to the illegitimate transactions are almost the same. Hence we used the proposed algorithm to delete the illegitimate cycles. Without the removal of illegitimate cycles detecting the malicious transactions that are used to evade tax is almost impossible. Figure $6$ shows the directed acyclic graph obtained after deleting all cycles from the graph given in Figure $5$.

\begin{figure}[h!]
\begin{center}
  \caption{Malicious transactions}
\includegraphics[scale=0.45]{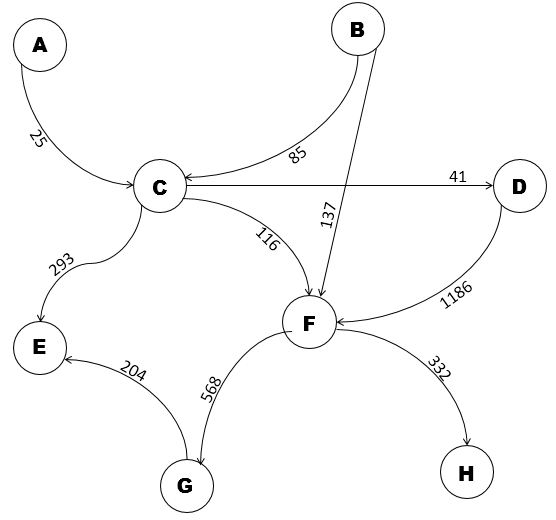}
\label{graph_table6}
\end{center}
\end{figure} 

The novelty of our technique over the others in deleting cycles is that we delete the cycle whose maximum-edge-$Value$ and minimum-edge-$Value$ are closest to each other before deleting the other cycles, and hence with a higher accuracy we delete the illegitimate edges that were used to form the cycle. On the other hand, if one doesn't take advantage of such patterns formed (as given in Observation $1$), which may be specific to the domain (which in our case is the Value Added Taxation system), the output DAG may not be of much use to the tax enforcement officers. As discussed before, the objective of the illegitimate transactions were to hide the malicious transactions, and the directed acyclic graph (DAG) given in Figure $6$ contains the required malicious transactions. Analysis of this DAG provides significant information, which can be used by the taxation authorities for conducting further investigations. The results obtained from their investigations are given in the following.

In their monthly tax return statements, all these dealers have shown huge purchases from certain dealers who are from outside the state and also not belong to this group of eight dealers. 

\begin{itemize}
\item	The eight dealers did total purchases of \rupee~798 crores, out of which non-creditable purchases (purchases from outside the state or international imports) are \rupee~622 crores. 
\item They paid \rupee~4.47 crores as VAT \& interstate sales tax (also known as CST).


\item	They have shown branch transfers(located branches in other states) of \rupee~230 crores on which no tax is required to be paid.
\item	They have shown questionable exports of \rupee~105 crores on which no tax is required to be paid.
\item	They have shown questionable inter state(CST) sales amounting to \rupee~111 crores on which a lesser rate of tax (@2\%) is applicable.
\item They have shown local VAT sales of \rupee~233 crores on which they have paid a tax of \rupee~2.47 crores.
\end{itemize}

By studying their whole purchases information from the malicious transactions (given in Figure $6$) and outside the state transactions, the taxation authorities observed that they should have paid \rupee~31.10 crores in tax. However, they actually paid only \rupee~4.47 crores as VAT \& CST by showing fictitious exports and inter state sales. Hence they evaded the payment of more than 85\% of tax.

From the original dataset, we observed that the number of dealers involved in any given $\it{circular\,trade}$ ranges from $2$ to $8$. This small number is due to the high risk factors involved in the process. Recall from $Step\,2$ of Sub-section $A$ in Section I that some of the dealers involved in a $\it{circular\,trade}$ are fictitious dealers and obtaining identification details for a large set of such dealers is a highly risky bargain. The dealers have to file their tax-returns on a monthly basis. In most of the $\it{circular\,trading}$ cases we encountered, the total number of transactions per month amount to several hundreds and rarely few thousands. In order to give an idea about the time taken by the proposed algorithm, we run it for synthetic data-sets of varying input size. For the same, we used a machine with an $8$ GB RAM and a $2.20$ GHz Intel Core $i5$ processor. Figure $7$ contains the graph that describe the time taken in seconds (on $Y$-axis) for varying input-graph sizes in terms of number of edges (on $X$-axis).

\begin{figure}[h!]
\begin{center}
  \caption{Running time for synthetic data-sets}
\includegraphics[scale=0.435
]{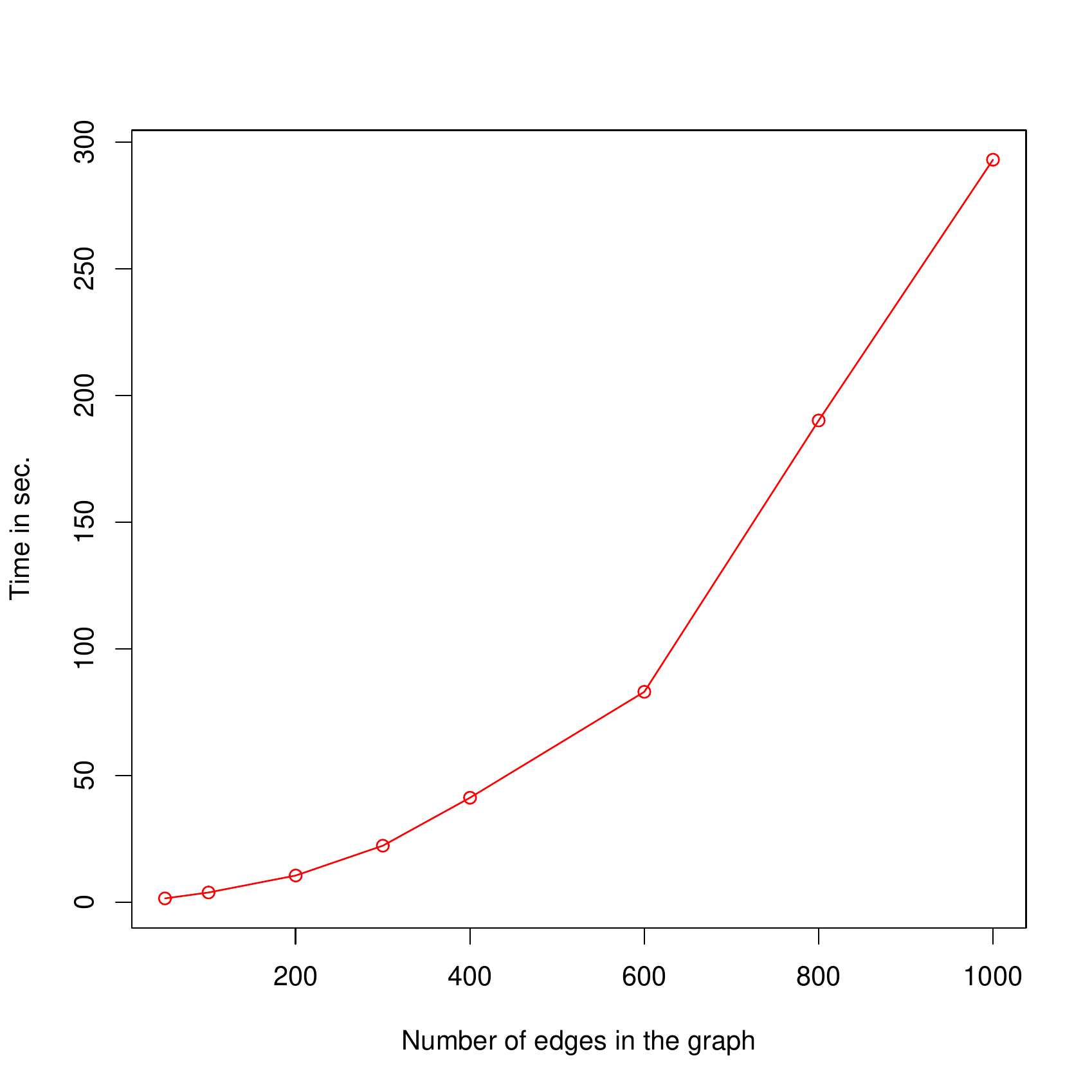}
\label{graph_table7}
\end{center}
\end{figure}

\/*

\section{Case Study}
We analyzed a case in which eight dealers are doing intensive $\it{circular\,trading}$ among themselves. Figure $5$ shows the details of the same in the form of a directed graph with vertices denoting the dealers, and directed edges denoting the direction of transactions along with the total amount of tax paid (in lakh of \rupee, 1 lakh = \rupee~1,00,000) to the seller by the buyer.

\begin{figure}[h!]
\begin{center}
  \caption{A known case of $\it{circular\,trading}$}
\includegraphics[scale=0.6]{cycles.PNG}
\label{graph_table8}
\end{center}
\end{figure}

In their monthly tax return statements, all the eight dealers show huge purchases from outside the state. Legally, they should have paid heavy taxes on all these purchases. The following points illustrate a brief overview of the transactions among them.

\begin{itemize}
\item	The eight dealers did total purchases of \rupee~798 crores, out of which non-creditable purchases (purchases from outside the state or international imports) are \rupee~622 crores. 
\item	They should have paid a total tax of \rupee~31.10 crores, but they paid only \rupee~4.47 crores as VAT \& interstate sales tax (also known as CST).
\item	Hence, they evaded the payment of about 85\% of tax.
\end{itemize}

They have done this by using the following ways:-

\begin{itemize}
\item	Most of the dealers have shown branch transfers (branches located in other states) which amounts to a total of \rupee~230 crores on which no tax is required to be paid.
\item	They have shown questionable amount of exports totalling to \rupee~105 crores on which no tax is required to be paid.
\item	They have shown questionable amount of interstate(CST) sales totalling to \rupee~111 crores on which a much lesser rate of tax (@2\%) is applicable.
\item	They have also shown local VAT sales of \rupee~233 crores in total on which the output tax is \rupee~11.65 crores, but have paid only \rupee~2.47 crores to the government. They could do this by raising invoices among the group members and showing Input Tax Credit (ITC). This is where $\it{circular\,trading}$ comes into picture.
\end{itemize}

Figure $6$ shows the directed acyclic graph obtained after deleting all cycles from the graph given in Figure $5$ using the algorithms described above. Note that the weight on each edge in the graph given in Figure $6$ shows the total tax paid by a particular buyer to a particular seller (total tax is the sum of all the tax values involved in multiple transactions between them). For example, as one can observe in the edge from vertex $A$ to vertex $C$ in Figure $6$, the total tax involved between them after deleting many transactions to remove the cycle given in Figure $5$ using the proposed algorithm is \rupee~$25$ lakhs. It is important to note that, here the set of transactions that makes up the sum of \rupee~25 lakhs is the point of interest to tax authorities.

\begin{figure}[h!]
\begin{center}
  \caption{The output DAG}
\includegraphics[scale=0.5]{output_graph.JPG}
\label{graph_table9}
\end{center}
\end{figure}

*/

\section{Conclusion}
In this paper, we formalized the infamous tax-evasion technique called $\it{circular\,trading}$. In $\it{circular\,trading}$, a group of traders fabricates heavy sales and(or) purchase transactions among themselves, which results in the flow of goods in a circular manner without any value addition, $ie.$, the input tax and the output tax due to the illegitimate transactions remains the same. The motivation to create such illegitimate transactions is to hide the malicious tax return information they have submitted to the tax authorities. The problem of removing the hence formed cycles is important as the tax authorities can easily detect the malicious transactions once the cycles are removed. Here, we proposed an algorithm to remove such cycles by making use of an important observation that the amount of tax payable by a dealer due to illegitimate sales and purchases transactions is almost zero. In addition, we run the proposed algorithm on a nexus of dealers involved in $\it{circular\,trade}$ and the results obtained are specified. In future, we try to define centrality measures for detecting the key players in $\it{circular\,trading}$. In addition, we plan to investigate whether there are more effective ways for removing cycles.


\section{Acknowledgment}

We are very grateful to the Telangana state government, India, for sharing the commercial tax dataset, which is used in the proposed work.
This work has been supported by Visvesvaraya PhD Scheme for Electronics and IT, Media Lab Asia, grant number EE/2015-16/023/MLB/MZAK/0176.

\ifCLASSOPTIONcaptionsoff
  \newpage
\fi


\begin{thebibliography}{00}

\bibitem{1ref}
V.~Van Vlasselaer, T.~Eliassi-Rad, L.~Akoglu, M.~Snoeck and B.~Baesens. \emph{``Gotcha! Network-based fraud detection for social security fraud,"} Journal of Management Science, Jul. 2016.

\bibitem{2ref}
E.~Dillon, \emph{The Fraudsters $\textendash$ How Con Artists Steal Your Money,} Merlin Publishing, Chapter 7, Sep. 2008.

\bibitem{2.1ref}
Godbole Committee, \emph{``Report on Economic Reforms of Jammu and Kashmir"},
Ministry of Finance, Government of Jammu and Kashmir, 1998.

\bibitem{3ref}
A.~Schenk and O.~Oldman, \emph{Value Added Tax: A Comparative Approach}, $1$st ed., Cambridge Tax Law Series, Jan. 2007.


\bibitem{4ref}
G.~K.~Palshikar and M.~M.~Apte, \emph{``Collusion set detection using graph clustering,"} Data Mining and Knowledge Discovery, vol. 16, Issue 2, Apr. 2008, pp. 135\textendash164.

\bibitem{5ref}
M.~Franke, B.~Hoser and J.~Schr\"oder, \emph{``On the analysis of irregular stock market trading behavior,"} Data Analysis, Machine Learning and Applications, Jan. 2007, pp. 355\textendash362.


\bibitem{6ref}
K.~Golmohammadi, O.~R.~Zaiane, and D.~Díaz, \emph{``Detecting stock market manipulation using supervised learning algorithms,"} Data Science and Advanced Analytics, IEEE, Nov. 2014, pp. 435\textendash441.


\bibitem{7ref}
J.~Wang, S.~Zhou, and J.~Guan, \emph{``Detecting potential collusive cliques in futures markets based on trading behaviors from real data,"} Journal of Neurocomputing 92 (2012), pp. 44\textendash53.


\bibitem{8ref}
E.~Vicente, A.~Mateos, and A.~Jiménez-Martín, \emph{``Detecting stock market manipulation using supervised learning algorithms,"} Modeling Decisions for Artificial Intelligence, Springer, Sep. 2016, pp. 205\textendash216.


\bibitem{9ref}
Z.~Assylbekov et al., \emph{``Detecting Value-Added Tax Evasion by Business Entities of Kazakhstan,"} Intelligent Decision Technologies, Springer, June 2016, pp. 37\textendash49.


\bibitem{10ref}
K.W.~Hsu, N.~Pathak  et al., \emph{``Data Mining Based Tax Audit Selection: A Case Study of a Pilot Project at the Minnesota Department of Revenue,"} Real World Data Mining Applications, Springer, Nov. 2014, pp. 221\textendash245.


\bibitem{10.1ref}
E.~W.~Dijkstra and S.~S.~Scholten, \emph{``Termination detection for diffusing computations,"} Information Processing Letter, vol. 11 (1980), pp. 1\textendash4.


\bibitem{11ref}
B.~Haeupler et al., \emph{``Incremental cycle detection, topological ordering, and strong component maintenance,"} ACM Transactions on Algorithms, vol. 8, Issue 1 (2012), pp. 1\textendash33.


\bibitem{12ref}
H.~Cui, J.~Niu, C.~Zhou and M.~Shu, \emph{``A Multi-Threading Algorithm to Detect and Remove Cycles in Vertex- and Arc-Weighted Digraph,"} Algorithms 2017, vol. 10 (2017), p. 115.


\bibitem{13ref}
M.~A.~Bender et al., \emph{``A New Approach to Incremental Cycle Detection and Related Problems,"} ACM Transactions on Algorithms, vol. 12, Issue 2 (2016), p. 22.


\end{thebibliography}
\end{document}